\newcommand{\ket}[1]{\left|#1\right\rangle}
\newcommand{\cket}[1]{\left|\widetilde{#1}\right\rangle}
\newcommand{\bra}[1]{\left\langle #1\right|}
\newcommand{\bracket}[2]{\left\langle #1|#2\right\rangle}
\newcommand{\encx}{{\bar{\bf{x}}}}
\newcommand{\synx}{{\hat{\bf{x}}}}
\newcommand{\encz}{{\bar{\bf{z}}}}
\newcommand{\synz}{{\hat{\bf{z}}}}
\newcommand{\hmax}[1]{{H}_{\max}^{#1}}
\newcommand{\hmin}[1]{{H}_{\min}^{#1}}
\newtheorem{theorem}{Theorem}
\begin{document}

\title[Duality of privacy amplification and data compression]{Duality of privacy amplification against quantum adversaries and data compression with quantum side information}
\author{Joseph M. Renes}
\affiliation{Institut f\"ur Angewandte Physik, Technische Universit\"at Darmstadt, Hochschulstr.~4a, 64289 Darmstadt, Germany}
\label{firstpage}

\begin{abstract}
We show that the tasks of privacy amplification against quantum adversaries and data compression with quantum side information are dual in the sense that the ability to perform one implies the ability to perform the other. These are two of the most important primitives in classical information theory, and are shown to be connected by complementarity and the uncertainty principle in the quantum setting. Applications include a new uncertainty principle formulated in terms of smooth min- and max-entropies, as well as new conditions for approximate quantum error correction. 
\end{abstract}

\maketitle
\section{Introduction}

Two of the most fundamental primitives in information theory are privacy amplification and data compression with side information, both of which involve manipulating the correlations between two random variables $Z$ and $Y$. Privacy amplification is the art of extracting that part of $Z$ which is uncorrelated from $Y$. In particular, the goal is to extract uniform randomness, in the form of a random variable $U$, from an input $Z$ in such a way that $U$ is completely uncorrelated with $Y$. In a cryptographic setting $Z$ might refer to a partially-random and partially-private key string, while $Y$ refers to knowledge held by an adversary. Meanwhile, the goal of data compression with side information is essentially the opposite, to determine that part of $Z$ which is not correlated with $Y$ and to make this available as the compressed version of $Z$. More specifically, an encoder would like to compress $Z$ into a smaller random variable $C$ such that a decoder can reconstruct $Z$ given both $C$ and the side information $Y$. 

These two tasks have direct, purely quantum analogs in quantum information theory. Data compression with side information translates into distillation of entanglement between two quantum systems $A$ and $B$ using only classical communication (the analog of $C$). The quantum version of privacy amplification is the removal of all correlations (both quantum and classical) between $A$ and $B$ by actions taken only on $A$, such that the density matrix for system $A$ is also transformed into a completely mixed state (the analog of $U$). 

Moreover, in the purely quantum realm the two quantum tasks are dual to one another, a feature which has been fruitfully exploited to construct a whole family of quantum information processing protocols~\cite{abeyesinghe_mother_2009}. The duality holds for complementary quantum systems, in the sense that if it is possible to maximally entangle two systems $A$ and $B$ such that $A$ itself is maximally mixed, then it is possible to completely decouple a maximally-mixed $A$ from the complementary system $R$ of $B$, and vice versa~\cite{schumacher_approximate_2002}. Two systems $B$ and $R$ are complementary, relative to system $A$, when the joint quantum state of $ABR$ is a pure state, a state which always exists in principle. That is, two systems $B$ and $R$ are complementary relative to $A$ when each one is the purifying system for the other and $A$~\footnote{In the communication scenario complementary systems become complementary channels. An arbitrary channel $\mathcal{E}$ taking $A$ to $B$ can be thought of as an isometry taking $A$ to $BR$ by the Stinespring dilation, and considering only the $R$ portion of the output defines the complementary channel $\mathcal{E}^\#$ to $\mathcal{E}$.}. 

In this paper we show that this duality also extends to the hybrid classical-quantum tasks of classical privacy amplification against quantum adversaries and classical data compression with quantum side information: The ability to perform one of the tasks implies the ability to perform the other. Here we are interested in manipulating the correlations between a classical random variable $Z$ and a quantum random variable, i.e.\ a quantum system $B$. Despite the hybrid nature of the resources, the analysis is still within the realm of quantum information theory, as we can and do imagine that $Z$ is produced by measurement of the quantum system $A$. 

Complementary quantum systems still constitute an important part of the duality, and compression of $Z$ given side information $B$ implies privacy amplification against $R$ and vice versa, just as in the purely quantum case. However, the duality takes on an additional complementary character, as the compression task is not dual to privacy amplification of $Z$ against $R$, but rather it is dual to privacy amplification of a complementary random variable, which we will call $X$, against $R$. Complementary random variables correspond to outcomes of measuring complementary observables, observables like position and momentum for which complete certainty in the outcome of one observable implies complete uncertainty in the outcome of the other. In the present context, if the random variable $Z$ is the result of measuring an observable $Z^A$ on system $A$, then $X$ is the result of measuring a complementary observable $X^A$ on $A$. In what follows we ignore the difference between an observable and random variable and simply call both $Z^A$ (or ${X}^A$). 

Of course, one of the pillars of quantum mechanics is that both measurements cannot be performed simultaneously. Because analysis of such complementary measurements can quickly become a maze of counterfactuals, let us describe the duality more precisely. We start with a pure quantum state $\psi^{ABR}$ describing the three quantum systems $A$, $B$, and $R$. Then we imagine a \emph{hypothetical} $Z^A$ measurement, say, and then design a protocol for data compression of the resulting classical random variable $Z^A$ given side information $B$. The protocol itself is real enough, and the duality then states that if we instead perform the $X^A$ measurement, it is possible to repurpose the compression protocol to perform privacy amplification of the classical random variable $X^A$ against system $R$. The same is true in the reverse direction (modulo the caveats discussed below). We stress that only one of the two conjugate measurements $Z^A$ or $X^A$ is ever performed on $\psi^{A}$; we merely contemplate what would be possible had the other measurement been performed.  


There are two caveats regarding the duality that should be emphasized. First, we can only establish a duality between protocols in which the privacy amplification function or data compression function is linear. This requirement stems from the need to interpret functions applied to $X^A$ as operations on $Z^A$ and vice versa. In general this is problematic, as $X^A$ and $Z^A$ are complementary observables and therefore actions taken on one have unavoidable back-actions on the other, but linear functions will offer a way around this problem. 

Secondly, the duality does not hold in both directions for arbitrary states of $ABR$. As we shall see, the ability to perform data compression with side information (CSI) implies the ability to perform privacy amplification (PA). However, we can only show the converse when $\psi^{ABR}$ has one of two simple forms, either $R$ is completely correlated with (a hypothetical measurement of) $Z^A$ or $B$ is completely correlated with (a hypothetical measurement of) $X^A$. These restrictions and the asymmetry of the duality can be traced back to a recently proven form of the  uncertainty principle~\cite{renes_conjectured_2009,berta_entropic_2009} and the fact that it only sets a \emph{lower} limit on knowledge of complementary observables. Going from privacy amplification to data compression implicitly requires an \emph{upper} limit, which we deal with by considering the equality conditions of the uncertainty principle, and these are shown to be exactly the two conditions named above. 


The remainder of the paper is devoted to elucidating the duality. In the next section we provide background on the two tasks, how protocols can be constructed using universal hashing, as well as the details of one-shot protocols handling arbitrary inputs and rates that can be achieved in the case of asymptotically-many identical inputs. Then in Sec.~\ref{sec:perfect} we examine the ``perfect'' cases, that is, when $Z^A$ is perfectly recoverable from $B$ or $R$ is completely uncorrelated with a uniformly random $X^A$, and show that the duality immediately follows from a recently discovered form of the uncertainty principle. Use of the uncertainty principle helps explain the duality in a simplified setting and understand the reason for the second caveat above. 

As perfect correlation or uncorrelation is difficult to achieve in practice, we are ultimately more interested in the approximate case.  In Sec.~\ref{sec:approx} we investigate the duality in the approximate case and show that $R$ is approximately uncorrelated with $X^A$ if $Z^A$ is approximately recoverable from $B$, and vice versa. This serves as a stepping stone to  studying full-fledged CSI and PA protocols, as taken up in Sec.~\ref{sec:iidprotocols}. Therein we show how CSI protocols utilizing linear hashing can be used to construct, and can be constructed from, linear-hashing based PA protocols. 
In the case of protocols designed for inputs consisting of asymptotically-many copies of a fixed resource state, the uncertainty principle of~\cite{renes_conjectured_2009,berta_entropic_2009} implies that the duality preserves optimality of protocols in that optimal CSI protocols can be transformed into optimal PA protocols, and vice versa. 
Combining this with recent results on one-shot CSI, this construction implies a new uncertainty principle formulated in terms of smooth min- and max-entropies, which we discuss in Sec.~\ref{sec:applications} along with additional applications and relations to other work.

\section{Background}
\subsection{Classical-Quantum States}
In order to describe protocols involving hybrid classical-quantum systems, it is convenient to work within the formalism of quantum mechanics. In this language, a classical random variable $Z^A$ and quantum side information $S$ can be described by the classical-quantum (cq) state
\begin{align}
\label{eq:cqstate}
{\psi}^{AS}_Z=\sum_{z=0}^{d-1} p_z\ket{z}\!\bra{z}^A\otimes\varphi_z^S,
\end{align}
where $z$ are the possible values the random variable $Z^A$ can take, with alphabet size $d$, $p_z$ is the probability that $Z^A=z$, and $\varphi_z^S$ is the quantum state of $S$  conditioned on the value of $z$. The $A$ system is effectively classical in the sense that an an unconditional measurement in the $\ket{z}$ basis has no effect on the state; essentially it has already been measured. The measurement defines the $Z^A$ observable, up to specifying the values of the possible outcomes, i.e.\ the position of the position observable. In the present context these values are irrelevant, as we are content with simply enumerating them. The subscript, here $Z$, indicates this is a cq state and which observable defines the classical basis. 

The entropy of the classical random variable $Z^A$ given the quantum side information $S$ is defined by 
\begin{align}
\label{eq:condent}
H(Z^A|R)_{{\psi}^{AS}_Z}\equiv H(AR)_{{\psi}^{AS}_Z}-H(S)_{{\psi}^{S}_Z},
\end{align}
where $H(A)_{\psi^A}=-{\rm Tr}\left[\psi^A\log_2 \psi^A\right]$ is the von Neumann entropy (measured in bits) and $\psi^S_Z={\rm Tr}_A[\psi^{AS}_Z]$ is the partial trace over the $A$ system.

In general, $\psi^{AS}$ can be thought of as the marginal of the pure state $\ket{\psi}^{AST}$,
where 
\begin{align}
\ket{\psi}^{AST}=\sum_{z=0}^{d-1}\sqrt{p_z}\ket{z}^A\ket{z}^{T_1}\ket{\varphi_z}^{ST_2}.
\end{align}
System $T$ consists of two parts, $T_2$ which purifies $S$ for each value of $z$ and $T_1$ which purifies $AST_2$. Here we name the systems $S$ and $T$ instead of $B$ and $R$ as in the introduction because in the subsequent sections $B$ and $R$ will take on both roles in different contexts. 

From the pure state we can still define the entropy of the classical random variable $Z^A$ given $S$ by first converting back to a cq state. We will often make use of the following definition: $H(Z^A|S)_{\psi^{AST}}\equiv H(Z^A|S)_{\psi^{AS}_Z}$. 

\subsection{Privacy Amplification Against Quantum Adversaries (PA)}
Privacy amplification is the art of extracting a truly secret random variable, a secret key, from one partially known to an eavesdropper holding some side information $S$. Functions performing privacy amplification are usually called \emph{extractors}, the goal being to produce as much secret key as possible. 
Privacy amplification against adversaries holding classical information was introduced in~\cite{bennett_to_1986,bennett_privacy_1988}, and was extended to the case of quantum side information in~\cite{devetak_distillation_2005,knig_power_2005,renner_universally_2005}. 

Using~(\ref{eq:cqstate}), the ideal output of privacy amplification would be a state for which $p_z=\frac{1}{d}$ and the $\varphi_z^S$ are all independent of $z$ and equal to one another. This last property implies that $\varphi_z^S=\psi^S$ for all $z$. 
In~\cite{renner_universally_2005} Renner and K\"onig introduced an approximate notion of security and uniformity of $Z^A$ which is universally composable, meaning that $Z^A$ can safely be input into any other composable protocol, and the overall protocol is certain to be secure by virtue of its constituent parts being secure. This definition says that $Z^A$ is approximately secure if the trace distance to the ideal case $\tfrac{1}{d}\mathbbm{1}\otimes {\psi}^S$ is small, where ${\psi}^S={\rm Tr}_A\left[{\psi}^{AS}_Z\right]$. We will say that $Z^A$ is $\epsilon$-secure if 
\begin{align}
\label{eq:epssecure}
p_{\rm secure}(X|S)_\psi\equiv\tfrac{1}{2}\left\|\psi_Z^{AS}-\tfrac{1}{d}\mathbbm{1}\otimes {\psi}^S\right\|_1\leq \epsilon.
\end{align}
Use of the trace distance $\left\|M\right\|_1\equiv{\rm Tr}[\sqrt{M^\dagger M}]$ means that the actual $\epsilon$-secure $Z^A$ can only be distinguished from the ideal, a perfect key, with probability at most $\epsilon$. 

Renner and K\"onig show that privacy amplification can produce an $\epsilon$-secure key of length (number of bits) $\ell_{\rm PA}^\epsilon(Z^A|S)_\psi$ given in terms of the smooth min-entropy~\cite{renner_universally_2005, renner_security_2005}: 
\begin{align}
\label{eq:pa}
\hmin{\epsilon_1}(Z^A|S)_\psi-2\log\tfrac{1}{\epsilon_2}+2\,\leq\, \ell_{\rm PA}^\epsilon(Z^A|S)_\psi\,\leq\, \hmin{\sqrt{2\epsilon}}(Z^A|S)_\psi,
\end{align}
where $\epsilon=\epsilon_1+\epsilon_2$. For a precise definition of the smooth min-entropy, see Appendix~\ref{app:smooth}. 

The lower bound is established by constructing an extractor based on \emph{universal hashing}. 
In this scheme the approximate key is created by applying a randomly chosen hash function $f$ to $Z^A$. The function is chosen from a universal family $F$ of hash functions, each mapping a size $d$ input alphabet to a size $m$ output alphabet, such that for any pair of inputs $z_1$ and $z_2$ the probability of a collision of the outputs is no greater than if the function were chosen at random from all functions:
\begin{align*}
{\rm Pr}_F[f(z_1)=f(z_2)]\leq \tfrac{1}{m}\qquad\forall\,\, z_1,z_2.
\end{align*}
More properly, such a family is called a 2-universal family, since the outputs exhibit a weaker form of pairwise independence. Hash families whose outputs are truly pairwise independent are called strongly 2-universal, a notion which can be easily extended to $k$-wise independence. In the present context we shall focus on using linear hash functions, and since the family of all linear hash functions is universal, we can immediately apply the results of~\cite{renner_universally_2005}. 

Meanwhile, the upper bound applies to any conceivable privacy amplification protocol, and stems from properties of the min-entropy itself. In the asymptotic i.i.d.\ case of $n\rightarrow \infty$ copies of $\bar{\psi}^{AS}_Z$, the min-entropy tends to the more well-known von Neumann entropy, $\hmin{\epsilon}(Z^{A^{\otimes n}}|S^{\otimes n})_{\psi^{\otimes n}}\rightarrow nH(Z^A|S)_\psi+O(\sqrt{n\log\frac{1}{\epsilon}})$~\cite{tomamichel_fully_2009}, which implies that in this case universal hashing can produce approximate keys at the rate
\begin{align}
\label{eq:parate}
r_{\rm PA}(\psi)\equiv\lim_{\epsilon\rightarrow 0}\lim_{n\rightarrow \infty}\tfrac{1}{n}\ell_{\rm PA}(Z^{A^{\otimes n}}|S^{\otimes n})_{\psi^{\otimes n}}=H(Z^A|S)_{\psi},
\end{align}
and furthermore this rate is optimal. 
These results nicely conform with the intuitive understanding of $\hmin{\epsilon}(Z^A|S)$ and $H(Z^A|S)$ as uncertainties of $Z^A$ given the side information $S$; the part of $Z^A$ unknown to $S$ should be roughly of this size, so it is sensible that this amount can in fact be extracted by privacy amplification.

\subsection{Data Compression with Quantum Side Information (CSI)}
The problem of data compression with side information, also known as information-reconciliation, is to compress a random variable $Z^A$ in such a way that it can later by recovered by the compressed version $Z'$ together with the side information $S$. Unlike privacy amplification, this protocol has two components, a compressor and a decompressor, the goal of course being to compress the input to as few bits as possible. The case of classical side information was first solved for in the asymptotic i.i.d.\ scenario by Slepian and Wolf~\cite{slepian_noiseless_1973}, and a one-shot version was given by Renner and Wolf~\cite{renner_simple_2005,renner_smooth_2004}. The quantum i.i.d.\ version was studied by Winter~\cite{winter_coding_1999} and Devetak and Winter~\cite{devetak_classical_2003}, and recently extended to the one-shot scenario by the present author~\cite{renes_one-shot_2010}.

The ideal output of such a scheme would be a cq state in which the $\varphi_z^S$ were perfectly distinguishable from one another, so that a corresponding measurement of $S$ would perfectly reconstruct $z$. A suitable approximate notion is that there should exist some measurement $\Lambda^S_z$ for which the probability $p_{\rm guess}(Z^A|B)$ of successfully identifying $z$ is large. When there does, we say $z$ is $\epsilon$-recoverable from $B$ in the sense that
\begin{align}
\label{eq:epsgood}
p_{\rm guess}(Z^A|B)=\sum_{z=0}^{d-1} p_z{\rm Tr}\left[\Lambda_z^S\varphi_z^S\right]\geq 1-\epsilon.
\end{align}

The one shot result can be formulated in terms of the dual quantity to the min-entropy, the max-entropy, defined in Appendix~\ref{app:smooth}. The minimum number of bits $\ell_{\rm CSI}^{\epsilon}(Z^A|S)_\psi$ needed to compress $Z^A$ such that it is $\epsilon$-recoverable from $S$ and the compressed version is bounded by
\begin{align}
\label{eq:csi}
\hmax{\sqrt{2\epsilon}}(Z^A|S)_\psi\,\leq\, \ell_{\rm CSI}^{\epsilon}(Z^A|S)_\psi\,\leq\, \hmax{{\epsilon_1}}(Z^A|S)_\psi+2\log\tfrac{1}{\epsilon_2}+4,
\end{align}
again for $\epsilon=\epsilon_1+\epsilon_2$. The upper bound is found by constructing a compressor using universal hashing and a decompressor using the so-called ``pretty good measurement'', while the lower bound follows from properties of the max-entropy. Like the min-entropy, the max-entropy also tends to the von Neumann entropy in the limit of $n\rightarrow\infty$ i.i.d.\ inputs. Defining the rate as in privacy amplification, we obtain
\begin{align}
\label{eq:csirate}
r_{\rm CSI}\equiv\lim_{\epsilon\rightarrow 0}\lim_{n\rightarrow \infty}\tfrac{1}{n}\ell_{\rm CSI}(Z^{A^{\otimes n}}|S^{\otimes n})_{\psi^{\otimes n}}=H(Z^A|S)_\psi.
\end{align}
This reproduces the results of Devetak and Winter for the asymptotic i.i.d.\ case.
Again this result conforms to the intuitive understanding of the conditional entropies. Since $\hmax{\epsilon}(Z^A|S)$ and $H(Z^A|S)_\psi$ are in some sense $S$'s uncertainty of $Z^A$, it is sensible that the compressor would have to augment the decompressor's information by this amount. 

\section{Duality from the Uncertainty Principle in the Perfect Case}
\label{sec:perfect}
We now show that the ideal cases that either $B$ is already perfectly correlated with $Z^A$ or $R$ is perfectly uncorrelated with $X^A$ are already dual by using a recently derived version of the uncertainty principle. Although using the uncertainty principle in this way will ultimately prove insufficient in the approximate case and when attemping to construct one protocol from the other, the analysis here serves to introduce the nature of the duality in a simplified setting, as well as understand the reasons behind the second caveat.

As remarked in the introduction, the duality between PA and CSI exists for complementary observables $Z^A$ and $X^A$. Let us be more specific and define these observables to be the Weyl-Heisenberg operators $Z\equiv\sum_{k=0}^{d-1} \omega^{k}\ket{k}\bra{k}$ and $X\equiv\sum_{k=0}^{d-1}\ket{k{+}1}\bra{k}$. Since they aren't Hermitian, these operators are not observables in the usual sense since the values they can take on are not real numbers. However, they each specify a basis of system $A$, enough to specify two measurements, which is is all we need here. The two are related by Fourier transform, since the eigenstates of $X$ are simply $\cket{x}=\frac{1}{\sqrt{d}}\sum_{z=0}^{d-1}\omega^{-xz}\ket{z}$. From this relation it is clear that the observables are complementary, as the result of $Z^A$ measurement on an $X^A$ eigenstate is completely random, and vice versa. 

Now consider a recently-discovered form of the uncertainty principle~\cite{renes_conjectured_2009,berta_entropic_2009}, which quantifies uncertainty by entropy and includes the case of quantum side information,
\begin{align}
\label{eq:cit}
H(X^A|R)_\psi+H(Z^A|B)_\psi\geq 1.
\end{align}
This holds for arbitrary states $\psi^{ABR}$, pure or mixed.
Loosely speaking, it states that the entropy $R$ has about the result of measuring $X^A$, plus the entropy $B$ has about the result of measuring $Z^A$, cannot be less than 1. Note that it is not possible to perform both of these measurements simultaneously, since the associated observables do not commute. Nevertheless, the uncertainty principle constrains what systems $B$ and $R$ can simultaneously ``know'' about the results of either measurement.

Let us see how this can be used to show that perfect $Z^A$ recovery from $B$ implies perfect $X^A$ security from $R$. Consider an arbitrary pure state $\ket{\psi}^{ABR}$, which we can write
\begin{align*}
\ket{\psi}^{ABR}&=\sum_z \sqrt{p_z}\ket{z}^A\ket{\varphi_z}^{BR}\\\nonumber
&=\sum_x \sqrt{q_x}\cket{x}^A\ket{\vartheta_x}^{BR}.
\end{align*}
using the $Z^A$ basis $\ket{z}^A$ or the $X^A$ basis $\cket{x}^A$. In the ideal case the states $\varphi_z^B$ are perfectly distinguishable, and therefore $H(Z^A|B)_\psi=0$. By the above this implies $H(X^A|R)_\psi=\log_2 d$, which can only occur if all the marginal states $\vartheta_x^R$ are identical. Hence $R$ is completely uncorrelated with $X^A$. Furthermore, $H(Z^A|B)_\psi=0$ also implies that $X^A$ is uniformly distributed. Since the uncertainty principle holds for any state, we can also apply it to $\psi^{AB}$. This yields $H(X^A)_\psi=\log_2 d$, meaning $X^A$ is uniformly distributed. Thus, $X^A$ is an ideal key, uniformly distributed and completely uncorrelated with $R$.

We cannot directly make use of the uncertainty principle for the converse, $X^A$ security from $R$ implies $Z^A$ recoverability from $B$. Assuming the former, we have $H(X^A|R)=1$. But this does not imply $H(Z^A|B)=0$ unless the uncertainty principle is tight. As an example, consider the $d=2$ the state $\ket{\psi}^{ABR}=\frac{1}{\sqrt{2}}\left(\ket{0}+i\ket{1}\right)^A\ket{\varphi}^{BR}$, for which $H(Z^A|B)=H(X^A|R)=1$. On the other hand, if the uncertainty principle is tight, then it is immediate that $H(X^A|R)=1$ implies $H(Z^A|B)=0$ and therefore the desired implication holds.

Thus we are interested in the equality conditions for Eq.~\ref{eq:cit}. The only currently known conditions are that (at least) one of $H(X^A|R)_\psi$, $H(X^A|B)_\psi$, $H(Z^A|R)_\psi$, or $H(Z^A|B)_\psi$ is zero~\cite{renes_conjectured_2009}, so that equality is met when the conditional entropies take on their extreme values. Put differently, the global state $\psi^{ABR}$ must in some way be a cq state, be it between $A$ and $R$, as in $\psi^{AR}=\psi^{AR}_X$ or $\psi^{AR}_Z$, or between $A$ and $B$, as in $\psi^{AB}=\psi^{AB}_Z$ or $\psi^{AB}_X$. Moreover, there must be perfect correlation between the two systems in the sense that the conditional marginal states in either $B$ or $R$ (which depend on the value of $X^A$ or $Z^A$) must be perfectly distinguishable.

For completeness, we briefly recapitulate the argument here. Consider the case that $H(Z^A|B)_\psi=0$, 
which immediately implies $H(X^A|R)_\psi\geq 1$. Since 1 is also an upper bound to the conditional entropy, it must be that $H(X^A|R)_\psi=1$ and the equality conditions are met. The same argument can be made starting from $H(X^A|R)_\psi=0$. The remaining two quantities $H(X^A|B)_\psi$ and $H(Z^A|R)_\psi$ are related by the complementary form of the uncertainty principle, obtained by interchanging either the complementary observables $X^A$ and $Z^A$ or the complementary systems $B$ and $R$. The derivation in~\cite{renes_conjectured_2009} simultaneously produces both forms of the uncertainty principle, meaning that satisfying the equality conditions for one implies the same for the other. Thus, the conditions $H(X^A|B)_\psi=0$ and $H(Z^A|R)_\psi=0$ also lead to equality in~(\ref{eq:cit}). 

Observe that in the former case of $H(X^A|B)=0$ and $H(X^A|R)=1$ we end up with $H(X^A|B)=H(Z^A|B)=0$, which is a sufficient condition to have maximal entanglement between $A$ and $B$, as discussed in~\cite{renes_physical_2008}. In the other case we end up with $H(Z^A|B)=H(Z^A|R)=0$ and $H(X^A|B)=H(X^A|R)=1$, a a situation similar to that of a $d=2$ GHZ state $\frac{1}{\sqrt{2}}\left(\ket{000}+\ket{111}\right)^{ABR}$.

\section{Duality in the Approximate Case}
\label{sec:approx}
In this section we examine the duality when $Z^A$ is approximately recoverable from $B$ or $R$ is approximately independent of a nearly uniform $X^A$. Unfortunately, the arguments using the uncertainty principle in the previous section cannot easily be modified to work in the approximate case, so here we present a more algebraic treatment. We start with $Z^A$ recovery implies $X^A$ security.

\begin{theorem}
\label{thm:recoverimpliessecure}
For an arbitrary pure state $\ket{\psi}^{ABR}$, suppose 
$p_{\rm guess}(Z^A|B)_\psi\geq 1-\epsilon$. Then $p_{\rm secure}(X^A|R)_\psi\leq \sqrt{2\epsilon}$. 
\end{theorem}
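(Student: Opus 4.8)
The plan is to connect the guessing probability of $Z^A$ from $B$ to a fidelity, use the pretty-good-measurement / operator-inequality machinery to produce a purification that is close to an ideal state, and then read off $X^A$-security from the structure of that ideal state. First I would write the pure state in the $Z^A$ basis, $\ket{\psi}^{ABR}=\sum_z\sqrt{p_z}\ket{z}^A\ket{\varphi_z}^{BR}$, and note that $p_{\rm guess}(Z^A|B)_\psi\ge 1-\epsilon$ means there is a measurement $\{\Lambda_z^B\}$ with $\sum_z p_z\tr{\Lambda_z^B\varphi_z^B}\ge 1-\epsilon$, where $\varphi_z^B=\tr{_R \varphi_z^{BR}}$. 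The key observation is that a measurement on $B$ recovering $z$ can be dilated to an isometry $V\colon B\to BB'$ followed by reading off a copy of $z$ in $B'$; applying $V$ to $\ket{\psi}$ and tracing nothing, the recovery success translates into $F\big(\ket{\psi}^{ABB'R},\,\ket{\Phi}\big)\ge\sqrt{1-\epsilon}$ for an ideal state $\ket{\Phi}$ in which $B'$ holds a perfect classical copy of $Z^A$ (i.e. $\ket{\Phi}=\sum_z\sqrt{p_z}\ket{z}^A\ket{z}^{B'}\ket{\xi_z}^{BR}$ for appropriate $\ket{\xi_z}$). Here I would invoke the standard relation between trace distance and fidelity, $\tfrac12\|\rho-\sigma\|_1\le\sqrt{1-F(\rho,\sigma)^2}$, to get $\tfrac12\|\psi-\Phi\|_1\le\sqrt{\epsilon}$ (possibly $\sqrt{2\epsilon}$ depending on the exact fidelity convention), and then use monotonicity of trace distance under the partial trace over $B'$ and $B$ to control $\|\psi^{AR}-\Phi^{AR}\|_1$.

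Next I would analyze the ideal state $\Phi$ directly in the $X^A$ basis. Because in $\ket{\Phi}$ the register $B'$ is perfectly correlated with $Z^A$, the reduced state $\Phi^{AR}=\sum_z p_z\ket{z}\!\bra{z}^A\otimes\xi_z^R$ is a cq state classical in the $Z^A$ basis, i.e. it has no coherence between distinct $\ket{z}$'s. The crucial point — exactly the content of the perfect-case argument in Sec.~\ref{sec:perfect}, which I may cite — is that a state diagonal in the $Z^A$ basis, when re-expressed in the complementary $X^A$ basis, has $X^A$ uniformly distributed and the conditional states on $R$ all equal: concretely, $\bar\Phi^{AR}_X=\tfrac1d\mathbbm 1^A\otimes\Phi^R$. (A short direct computation: $\bra{\widetilde{x}}\Phi^{AR}\ket{\widetilde{x'}}=\tfrac1d\sum_z \omega^{(x-x')z}p_z\,\xi_z^R$ does \emph{not} obviously vanish off-diagonal — so rather than this I would instead argue via the purification: in $\ket{\Phi}$ the system $R$ is purified by $AB'B$, and since $B'$ already carries all of $Z^A$, measuring $X^A$ on $A$ leaves $R$'s marginal untouched because $A$ is, conditioned on $B'$, in a known basis state and the $X^A$-measurement acts unitarily-up-to-relabelling within each such sector... ) In any case the clean statement I want is $p_{\rm secure}(X^A|R)_\Phi=0$, which holds for any state in which some register perfectly copies $Z^A$, and this is precisely the perfect-case duality already established.

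Finally I would combine the two pieces: $p_{\rm secure}(X^A|R)_\psi=\tfrac12\big\|\psi^{AR}_X-\tfrac1d\mathbbm1\otimes\psi^R\big\|_1$, and since $\psi^{AR}$ and $\Phi^{AR}$ are $\sqrt\epsilon$-close in trace distance (hence so are their dephasings in the $X^A$ basis, dephasing being trace-nonincreasing, and so are their $R$-marginals), a triangle inequality bounds $p_{\rm secure}(X^A|R)_\psi$ by $p_{\rm secure}(X^A|R)_\Phi$ plus a small multiple of $\sqrt\epsilon$; with $p_{\rm secure}(X^A|R)_\Phi=0$ and careful accounting of the constants one lands at $\sqrt{2\epsilon}$. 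The main obstacle I anticipate is bookkeeping the constant factors so that the fidelity-to-trace-distance conversion, the two or three triangle inequalities, and the choice of fidelity normalization all conspire to give exactly $\sqrt{2\epsilon}$ rather than, say, $2\sqrt{2\epsilon}$; a secondary subtlety is making rigorous the claim that re-expressing a $Z^A$-classical state in the $X^A$ basis yields an ideal key on $R$ — getting the ``phantom'' purifying register $B'$ set up so that this genuinely reduces to the perfect case of Sec.~\ref{sec:perfect}, rather than requiring a fresh argument.
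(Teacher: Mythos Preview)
Your strategy is essentially the paper's: dilate the recovery measurement $\{\Lambda_z^B\}$ to an isometry, compare the resulting pure state to an ideal one in which an ancilla holds a perfect copy of $Z^A$, and then exploit the structure of that ideal state in the $X^A$ basis. Two points of execution differ and are worth tightening.

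First, the paper stays in fidelity throughout rather than passing to trace distance early. With the ideal state taken as $\ket{\xi}=\sum_z\sqrt{p_z}\ket{z}^A\ket{z}^M\ket{\varphi_z}^{BR}$ (i.e.\ your $\ket{\xi_z}$ equal to the original $\ket{\varphi_z}$), the overlap is $\sum_z p_z\bra{\varphi_z}\sqrt{\Lambda_z^B}\ket{\varphi_z}\geq p_{\rm guess}\geq 1-\epsilon$ via the operator inequality $\sqrt{\Lambda}\geq\Lambda$ for $0\leq\Lambda\leq\mathbbm{1}$; this is what replaces your unspecified ``$F\geq\sqrt{1-\epsilon}$''. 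The paper then applies the single CPTP map ``measure $X^A$, trace out $BM$'' to both states, uses monotonicity of the fidelity once, and converts to trace distance only at the very end, yielding $\sqrt{2\epsilon}$ with no triangle-inequality bookkeeping. Your route also lands on the right constant, but only because the choice $\ket{\xi_z}=\ket{\varphi_z}$ forces $\Phi^R=\psi^R$ exactly, so the extra triangle term vanishes; you should make that choice explicit rather than leave it as ``for appropriate $\ket{\xi_z}$''.

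Second, your hesitation about showing $p_{\rm secure}(X^A|R)_\Phi=0$ is unnecessary. That quantity involves only the $X^A$-dephased state $\bar\Phi_X^{AR}$, i.e.\ only the diagonal blocks $\bra{\widetilde x}\Phi^{AR}\ket{\widetilde x}=\tfrac1d\sum_z p_z\,\xi_z^R=\tfrac1d\Phi^R$, which are manifestly $x$-independent. The off-diagonal terms you computed are irrelevant, so your first direct computation already works and no separate appeal to Sec.~\ref{sec:perfect} is needed. The paper makes this even more transparent by rewriting $\ket{\xi}$ in the $\cket{x}^A$ basis as $\tfrac{1}{\sqrt d}\sum_x\cket{x}^A(Z^x)^M\ket{\psi}^{MBR}$, from which $\bar\xi_X^{AR}=\tfrac1d\mathbbm{1}^A\otimes\psi^R$ is immediate.
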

\begin{proof}
Start by performing the measurement coherently with a partial isometry $U^{B\rightarrow BM}$ and an ancillary system $M$. This transforms the state according to
\begin{align*}
\ket{\psi'}^{ABMR}&\equiv U^{B\rightarrow BM}\ket{\psi}^{ABR}\\
&\equiv\sum_{z,z'}\sqrt{p_z}\ket{z}^A\ket{z'}^M\sqrt{\Lambda_{z'}^B}\ket{\varphi_z}^{BR}.
\end{align*} 
The ideal output would be  
\begin{align*}
\ket{\xi}^{ABMR}=\sum_z\sqrt{p_z}\ket{z}^A\ket{z}^M\ket{\varphi_z}^{BR},
\end{align*}
and computing the fidelity $F(\psi',\xi)\equiv|{\bracket{\psi'}{\xi}}|$ between the two we find  
\begin{align*}
F(\psi',\xi)&=\bra{\xi}U^{B\rightarrow BM}\ket{\psi}^{ABR}\\
&=\sum_z p_z\bra{\varphi_z}\sqrt{\Lambda_{z}^B}\ket{\varphi_z}^{BR}
\\&\geq \sum_z p_z\bra{\varphi_z}{\Lambda_{z}^B}\ket{\varphi_z}^{BR}\\
&=p_{\rm guess}.
\end{align*}
Here we have used the fact that $\sqrt{\Lambda}\geq \Lambda$ for $0\leq \Lambda\leq\mathbbm{1}$. 
Now rewrite $\xi$ using the complementary basis $\cket{x}^A$ in anticipation of measuring $X^A$. The result is
\begin{align*}
\ket{\xi}^{ABMR}&=\tfrac{1}{\sqrt{d}}\sum_{x}\cket{x}^A\sum_z\omega^{xz}\ket{z}^{M}\ket{\varphi_z}^{BR}\\
&=\tfrac{1}{\sqrt{d}}\sum_{x}\cket{x}^A\left(Z^x\right)^M\sum_z\ket{z}^{M}\ket{\varphi_z}^{BR}\\
&=\tfrac{1}{\sqrt{d}}\sum_{x}\cket{x}^A\left(Z^x\right)^M\ket{\psi}^{MBR}.
\end{align*}
In the last line we have implicitly defined the state $\ket{\psi}^{MBR}$, which is just $\ket{\psi}^{ABR}$ with $A$ replaced by $M$. It is easy to work out that the result of measuring $X^A$ and marginalizing over $BM$ is the ideal output of privacy amplification of $X^A$ against $R$, namely
\begin{align*}
\bar{\xi}_X^{AR}=\tfrac{1}{d}\sum_x \cket{x}\bra{\widetilde{x}}^A\otimes \psi^R=\tfrac{1}{d}\mathbbm{1}^A\otimes \psi^R.
\end{align*}
Since $\ket{\psi}^{ABR}$ and $\ket{\psi'}^{ABMR}$ are related by the isometry $U^{B\rightarrow BM}$, measuring $X^A$ and tracing out $BM$ results in the same output for both input states. And because the fidelity cannot decrease under such a quantum operation (see Appendix~\ref{app:tdf}), this implies
\begin{align*}
F(\bar{\psi}_X^{AR},\tfrac{1}{d}\mathbbm{1}^A\otimes \psi^R)&=F(\bar{\psi'}_X^{AR},\tfrac{1}{d}\mathbbm{1}^A\otimes \psi^R)\\
&\geq F(\psi',\xi)\\
&\geq p_{\rm guess}(Z^A|R)_\psi\\
&\geq 1-\epsilon.
\end{align*}
Finally, from~\ref{eq:tdfbounds} we have $p_{\rm secure}(X^A|R)\leq \sqrt{1-F(\bar{\psi}_X^{AR},\tfrac{1}{d}\mathbbm{1}^A\otimes \psi^R)^2}\leq\sqrt{2\epsilon}$.
\end{proof}

As discussed in the previous section, there are two routes from $\epsilon$-security of $X^A$ against $R$ to $\epsilon$-recovery of $Z^A$ from $B$. The first case, when $H(X^A|B)_\psi=0$, was implicitly used by Devetak and Winter in their construction of an entanglement distillation protocol achieving the so-called hashing bound~\cite{devetak_distillation_2005}. The second case, $H(Z^A|R)_\psi=0$ has not, to our knowledge, been previously studied, but is more natural in the data compression scenario as it enforces the cq nature of the $AB$ state.
\begin{theorem}
\label{thm:secureimpliesrecover}
If $\ket{\psi}^{ABR}$ is such that $p_{\rm secure}(X^A|R)_\psi\leq \epsilon$ and either \emph{(a)} $H(X^A|B)_\psi=0$ or \emph{(b)} $H(Z^A|R)_\psi=0$, then 
$p_{\rm guess}(Z^A|B)_\psi\geq 1-\sqrt{2\epsilon}$. 
\end{theorem}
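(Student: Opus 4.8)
The plan is to prove Theorem~\ref{thm:secureimpliesrecover} as a converse to Theorem~\ref{thm:recoverimpliessecure}, replacing the monotonicity of fidelity used there by Uhlmann's theorem so as to run the same chain of inequalities backwards. First I would convert the hypothesis $p_{\rm secure}(X^A|R)_\psi\le\epsilon$ into a fidelity (equivalently, purified-distance) statement via the bounds of Appendix~\ref{app:tdf}: from $\tfrac12\|\bar\psi_X^{AR}-\tfrac1d\mathbbm{1}^A\otimes\psi^R\|_1\le\epsilon$ one gets $F(\bar\psi_X^{AR},\tfrac1d\mathbbm{1}^A\otimes\psi^R)\ge 1-\epsilon$. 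A natural purification of $\bar\psi_X^{AR}$ is obtained by coherently registering the hypothetical $X^A$ outcome into an ancilla $M$; call it $\ket{\psi'}^{AMBR}$, which is $\ket{\psi}^{ABR}$ acted on only in the $A$ sector. Uhlmann's theorem then supplies a purification $\ket{\xi}^{AMBR}$ of the ideal privacy-amplification state $\tfrac1d\mathbbm{1}^A\otimes\psi^R$ with $|\bracket{\psi'}{\xi}|\ge 1-\epsilon$.

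The crux is to show that, under hypothesis (a) or (b), this nearby $\ket{\xi}$ is — or can be chosen to be — a state from which $Z^A$ is recoverable from $B$; granting that, one reads off the recovering measurement, applies it to $\ket{\psi}$, and uses the elementary inequality $\sqrt{\Lambda}\ge\Lambda$ together with the Cauchy--Schwarz step from the fidelity computation in the proof of Theorem~\ref{thm:recoverimpliessecure} (now run backwards) to obtain the claimed bound $p_{\rm guess}(Z^A|B)_\psi\ge 1-\sqrt{2\epsilon}$. The reason an extra hypothesis is needed at all is that a generic purification of $\tfrac1d\mathbbm{1}^A\otimes\psi^R$ carries no correlation whatsoever between $B$ and $Z^A$, so the job is to see that (a) and (b) remove this freedom, and these are precisely the equality conditions for the uncertainty relation~(\ref{eq:cit}) discussed in Sec.~\ref{sec:perfect}. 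In case (b), $H(Z^A|R)_\psi=0$ makes $\psi^{AR}$ a cq state classically correlated with $Z^A$ with perfectly distinguishable conditional states $\varphi_z^R$, which forces the relevant purification into the GHZ-like form met at the end of Sec.~\ref{sec:perfect}, in which $B$ carries a copy of $Z^A$. In case (a), $H(X^A|B)_\psi=0$ makes $\psi^{AB}$ a cq state in the $X^A$ basis with distinguishable conditionals, and together with the near-decoupling of $X^A$ from $R$ this forces $A$ and $B$ close to maximally entangled (the $H(X^A|B)=H(Z^A|B)=0$ situation of Sec.~\ref{sec:perfect}, used implicitly by Devetak and Winter); a maximally entangled $AB$ has $Z^A$ perfectly recoverable from $B$, and continuity carries this to the approximate case.

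I expect the main obstacle to be exactly this middle step, in two linked parts: first, turning ``$\ket{\xi}$ purifies the ideal'' plus (a) or (b) into genuine $Z^A$-versus-$B$ correlation rather than having the correlation hide in the $X^A$-registering ancilla $M$; and second, ensuring the recovery measurement extracted from $\ket{\xi}$ acts on $B$ alone — the precise place where $M$, carrying $X^A$, is not available. This is where the duality is inherently one-sided: the CSI-to-PA direction of Theorem~\ref{thm:recoverimpliessecure} used only data processing and needs no side conditions, whereas reversing it effectively requires an upper bound on what $B$ and $R$ jointly know about the conjugate observables, which~(\ref{eq:cit}) does not provide and which the two ``perfect'' cases (a) and (b) are designed to supply.
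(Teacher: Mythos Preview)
Your high-level plan --- convert the security bound to a fidelity statement, invoke Uhlmann's theorem to produce a nearby state $\ket{\xi}$ from which $Z^A$ is perfectly recoverable, then transfer the recovery to $\ket{\psi}$ --- is the same skeleton the paper uses. Two points, however, need correction.

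First, the final step is not what you describe. The inequality $\sqrt{\Lambda}\ge\Lambda$ in Theorem~\ref{thm:recoverimpliessecure} yields $F\ge p_{\rm guess}$; it cannot be ``run backwards'' to give a lower bound on $p_{\rm guess}$ from a lower bound on $F$, and there is no Cauchy--Schwarz step there to invert. The paper's route is different and purely metric: from $F(\psi,\xi)\ge 1-\epsilon$ one passes to $D(\psi,\xi)\le\sqrt{2\epsilon}$ via Eq.~(\ref{eq:tdfbounds}), then applies the measurement $\{\Lambda_z^B\}$ (a quantum operation) to both states and uses monotonicity of the trace distance. Since the measurement recovers $Z^A$ \emph{perfectly} on $\xi$, the variational distance of the outcome distributions gives $1-p_{\rm guess}(Z^A|B)_\psi\le\sqrt{2\epsilon}$ directly. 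No operator inequality is involved.

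Second, your ``crux'' paragraph correctly locates the difficulty but does not actually resolve it, and the global-Uhlmann step you propose (purify $\bar\psi_X^{AR}$ with an ancilla $M$ and let Uhlmann hand you \emph{some} $\ket{\xi}$) is exactly the move that risks hiding the $Z^A$ correlation in $M$ rather than $B$. The paper avoids this by never applying Uhlmann globally. Instead it first writes $\ket{\psi}$ in the explicit structural form forced by the hypothesis --- $\sum_x\sqrt{q_x}\,\cket{x}^A\cket{x}^{B_1}\ket{\vartheta_x}^{B_2R}$ in case~(a), and $\tfrac{1}{\sqrt d}\sum_x\cket{x}^A(Z^x)^{R_1}\ket{\theta}^{BR}$ in case~(b) --- so that the fidelity $F(\bar\psi_X^{AR},\tfrac1d\mathbbm{1}\otimes\psi^R)$ reduces to a sum of fidelities between the \emph{conditional} $R$-marginals and $\psi^R$ (case~(a)) or to the single quantity $F(\theta^R,\psi^R)$ (case~(b)). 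Uhlmann is then applied term-by-term, producing isometries $U_x^{MB'\to B_2}$ or $U^{MB\to B}$ whose range is explicitly inside $B$. This is what guarantees the resulting $\ket{\xi}$ has $Z^A$ recoverable from $B$ (not $MB$), and it gives the measurement $\Lambda_z^B$ explicitly as a conjugated projector. Your proposal gestures at the right structural facts but stops short of this concrete local Uhlmann construction, which is where the work actually lies.
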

\begin{proof}
Start with case (a), whose condition implies that $\ket{\psi}^{ABR}$ takes the form
\begin{align*}
\ket{\psi}^{ABR}=\sum_x \sqrt{q_x}\cket{x}^A\cket{x}^{B_1}\ket{\vartheta_x}^{B_2R},
\end{align*}
where $B=B_1B_2$. Tracing out $B$ gives the cq state $\bar{\psi}_X^{AR}=\sum_x q_x\cket{x}\bra{\widetilde{x}}^A\otimes \vartheta_x^R$, and the condition $p_{\rm secure}(X^A|R)_\psi\leq \epsilon$ implies the fidelity of $\bar{\psi}_X^{AR}$ with the ideal output exceeds $1-\epsilon$: 
\begin{align*}
F(\bar{\psi}_X^{AR},\tfrac{1}{d}\mathbbm{1}^A\otimes\psi^R)&=\sum_x\sqrt{\tfrac{q_x}{d}}F(\vartheta_x^R,\psi^R)\\
&=\sum_x\sqrt{\tfrac{q_x}{d}}F(\ket{\vartheta_x}^{B_2R},U_x^{MB'\rightarrow B_2}\ket{\psi}^{MB'R})\\
&\geq 1-\epsilon.
\end{align*}
To get to the second line we have used Uhlmann's theorem, with corresponding isometries $U_x^{MB'\rightarrow B_2}$ for each state $\vartheta_x^R$, and the state $\ket{\psi}^{MB'R}$ is the same as $\ket{\psi}^{ABR}$ with $A$ replaced by $M$ and $B$ by $B'$. Now define the state 
\begin{align*}
\ket{\xi}^{ABR}=\tfrac{1}{\sqrt{d}}\sum_{x=0}^{d-1}\cket{x}^A\cket{x}^{B_1}U_x^{MB'\rightarrow B_2}\ket{\psi}^{MB'R},
\end{align*}
and observe that that $F(\ket{\xi}^{ABR},\ket{\psi}^{ABR})=F(\bar{\psi}_X^{AR},\tfrac{1}{d}\mathbbm{1}^A\otimes\psi^R)$. Hence $F(\ket{\xi}^{ABR},\ket{\psi}^{ABR})\geq 1-\epsilon$, and converting to trace distance, we find $D(\psi^{ABR},\xi^{ABR})\leq\sqrt{2\epsilon}$. 

Applying the conditional isometry 
\begin{align*}
V^{B_1B_2\rightarrow B_1MB'}=\sum_x\cket{x}\bra{\widetilde{x}}^{B_1}\otimes U_x^{\dagger MB'\rightarrow B_2}
\end{align*}
to $\ket{\xi}^{ABR}$ yields 
$\frac{1}{\sqrt{d}}\sum_x\cket{x}^A\cket{x}^{B_1}\ket{\psi}^{MB'R}$, and converting the result back to the $\ket{z}$ basis gives 
$\frac{1}{\sqrt{d}}\sum_z\ket{z}^A\ket{-z}^{B_1}\ket{\psi}^{MB'R}$, where aritmetic inside the state vector is modulo $d$. Thus, the measurement 
\begin{align*}
\Lambda_z^B=V^{\dagger B_1B_2\rightarrow B_1MB'}\ket{-z}\bra{-z}^{B_1}V^{B_1B_2\rightarrow B_1MB'}
\end{align*}
enables perfect recovery of $z$ from $B$ for the state $\xi^{ABR}$. But the measurement is a quantum operation, which cannot increase the trace distance, and the trace distance after a measurement is simply the variational distance of the resulting probability distributions. Therefore we can infer that
\begin{align*}
\tfrac{1}{2}\sum_{z,z'}\left|p_z\delta_{z,z'}-p_z{\rm Tr}\left[\Lambda^B_{z'}\varphi_z^B\right]\right|\leq \sqrt{2\epsilon}.
\end{align*}
Working out the lefthand side of this equation, we find that $p_{\rm guess}(Z|B)_\psi\geq 1-\sqrt{2\epsilon}$. 

Now consider case (b), whose condition implies $\ket{\psi}^{ABR}$ can be written
\begin{align*}
\ket{\psi}^{ABR}=\sum_z\sqrt{p_z}\ket{z}^A\ket{z}^{R_1}\ket{\varphi_z}^{BR_2}.
\end{align*}
Using the complementary basis for $A$ gives the alternate form
\begin{align}
\ket{\psi}^{ABR}&=\tfrac{1}{\sqrt{d}}\sum_{xz}\sqrt{p_z}\cket{x}^A\omega^{xz}\ket{z}^{R_1}\ket{\varphi_z}^{BR_2}\nonumber\\
&=\tfrac{1}{\sqrt{d}}\sum_{x}\cket{x}^A(Z^x)^{R_1}\sum_z\sqrt{p_z}\ket{z}^{R_1}\ket{\varphi_z}^{BR_2}\nonumber\\
&=\tfrac{1}{\sqrt{d}}\sum_{x}\cket{x}^A(Z^x)^{R_1}\ket{\theta}^{BR},
\label{eq:alternate}
\end{align}
where in the last line we have implicitly defined the state $\ket{\theta}^{BR}$. Observe that $\psi^R$ is invariant under the action of $(Z^x)^{R_1}$, since $\psi^R=\frac{1}{d}\sum_x (Z^x)^{R_1}\theta^R(Z^x)^{\dagger R_1}$.
Next, compute the fidelity of $\psi_X^{AR}$ with $\tfrac{1}{d}\mathbbm{1}^X\otimes\psi^R$, using the definition $\theta_x^R=(Z^x)^{R_1}\theta^R(Z^x)^{\dagger R_1}$:
\begin{align*}
F(\psi_X^{AR},\tfrac{1}{d}\mathbbm{1}^X\otimes\psi^R) 
&=\tfrac{1}{{d}}\sum_xF(\theta_x^R,\psi^R)\\
&=\tfrac{1}{{d}}\sum_x F((Z^x)^{R_1}\theta^R(Z^x)^{\dagger R_1},\psi^R)\\
&=F(\theta^R,\psi^R).
\end{align*}
Again $p_{\rm secure}(X|R)\leq \epsilon$ implies $F(\psi_X^{AR},\tfrac{1}{d}\mathbbm{1}^X\otimes\psi^R) \geq 1-\epsilon$. Since we now have $F(\theta^R,\psi^R)\geq 1-\epsilon$, it follows by Uhlmann's theorem that there exists an isometry $U^{MB\rightarrow B}$ such that $\bra{\theta}U^{MB\rightarrow B}\ket{\psi}^{MBR}\geq 1-\epsilon$. Now consider the 
state 
\begin{align*}
\ket{\xi}^{ABMR}&\equiv\tfrac{1}{\sqrt{d}}\sum_x\cket{x}^A(Z^x)^{R_1}\ket{\psi}^{MBR}\\
&=\tfrac{1}{{d}}\sum_{x,z,z'}\ket{z'}^A\omega^{x(z-z')}\sqrt{p_z}\ket{z,z}^{MR_1}\ket{\varphi_z}^{BR_2}\\
&=\sum_z\sqrt{p_z}\ket{z}^A\ket{z}^M\ket{z}^{R_1}\ket{\varphi_z}^{BR_2},
\end{align*}
from which $z$ can obviously be recovered by measuring $M$. 
The overlap of this state with $U^{\dagger MB\rightarrow B}\ket{\psi}^{ABR}$ is just $F(\theta^R,\psi^R)$, so we should expect $p_{\rm guess}(Z|B)_\psi$ to be large when using the measurement 
\begin{align*}
\Lambda^B_z=U^{MB\rightarrow B}\ket{z}\bra{z}^MU^{\dagger MB\rightarrow B}.
\end{align*}
Indeed, converting the fidelity to trace distance and working out the variational distance just as before yields $p_{\rm guess}(Z|B)_\psi\geq 1-\sqrt{2\epsilon}$.
\end{proof}

%

\section{Duality for protocols}
\label{sec:iidprotocols}
Having worked out the duality for approximate recoverability or secrecy, we can now begin investigating how the duality works for protocols designed to transform arbitrary inputs to the approximate case. Since the duality concerns transforming operations on $X^A$ into operations on $Z^A$ and vice versa, we first face the problem that operations on one necessarily affect the other in some way. By confining our analysis to PA and CSI protocols in which the outputs are linear functions of the inputs, we may avail ourselves of the stabilizer formalism, and this will enable us to ensure the back action from $X^A$ operations is consistent with the $Z^A$ transformation we wish to implement, and vice versa. A short description of those aspects of the stabilizer formalism needed here is given in Appendix~\ref{app:stabilizer}. We begin with the case of repurposing data compression into privacy amplification, as it is more straightforward. 

\begin{theorem}
Let $\mathcal{P}^\epsilon_{\rm CSI}$ be a protocol for compressing of $Z^A$ to a string $C$ of $\ell_{\rm CSI}^\epsilon$ bits via a linear compression encoding map $f:Z\rightarrow C=\{0,1\}^{\ell_{\rm CSI}^\epsilon}$. If $Z^A$ is $\epsilon$-recoverable by the decoding map $\mathcal{D}:(C,B)\rightarrow Z'$, then the encoder can be repurposed to extract $\lceil\log_2{\rm dim}(A)\rceil{-}\ell_{\rm CSI}^\epsilon$ $\sqrt{2\epsilon}$-secure bits from $X^A$ which are uncorrelated with $R$. 
\end{theorem}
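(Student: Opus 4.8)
The plan is to turn the linear compression map $f$ into the privacy amplification extractor by taking the extractor output to be a complementary set of linear functions — those labeling the cosets of the code defined by $f$. Concretely, a linear map $f:\mathbb{Z}_d^n\to\{0,1\}^{\ell_{\rm CSI}^\epsilon}$ (thinking of $Z^A$ as living on $n$ qudits, $\dim(A)=d^n$) is described by $\ell_{\rm CSI}^\epsilon$ linear functionals, which in the stabilizer language correspond to $Z$-type stabilizer generators. Together with the conjugate $X$-type operators they generate a group; the remaining $\lceil\log_2\dim(A)\rceil-\ell_{\rm CSI}^\epsilon$ degrees of freedom in the $X^A$ register — call the corresponding linear functions $g$ — are exactly the bits we will extract from $X^A$. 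The key structural fact, to be taken from the stabilizer appendix, is that applying $f$ to (a hypothetical) $Z^A$ measurement and recording the syndrome is, up to the conjugation by $X$-type operators implementing the coset shift, the same physical operation as applying $g$ to the $X^A$ register.

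The first step is to set up this correspondence precisely: write $\ket{\psi}^{ABR}$ in the $\ket{z}$ basis, let the encoder act coherently so that $C=f(Z^A)$ is copied onto an ancilla, and then condition on a value $c$ of $C$, which projects $A$ onto a coset. Second, I invoke Theorem~\ref{thm:secureimpliesrecover}'s companion, Theorem~\ref{thm:recoverimpliessecure}, \emph{relative to the coset}: within the coset labeled $c$, the hypothesis that $\mathcal{D}$ recovers $Z^A$ from $(C,B)$ with $p_{\rm guess}\geq 1-\epsilon$ says precisely that the reduced coset state has $Z^A$ recoverable from $B$, hence by Theorem~\ref{thm:recoverimpliessecure} the complementary variable $g(X^A)$ on that coset is $\sqrt{2\epsilon}$-secure against $R$. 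Third, I average over $c$ (weighted by $\Pr[C=c]$) and use joint convexity / the triangle inequality for trace distance — together with the fact that the coset label $c$ is a deterministic linear function of $X^A$, so its value is classical side information already available — to conclude that the full string $g(X^A)$ is $\sqrt{2\epsilon}$-secure against $R$. The length count $\lceil\log_2\dim(A)\rceil-\ell_{\rm CSI}^\epsilon$ is just the dimension count of the complementary linear functions.

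The main obstacle, and the place where the linearity hypothesis and the stabilizer formalism do real work, is step one: making rigorous the claim that "$f$ applied to $Z^A$" and "$g$ applied to $X^A$" are implemented by the \emph{same} coherent isometry on $A$ (plus ancillas), so that recoverability of one and security of the other refer to genuinely the same physical protocol rather than two mutually-counterfactual measurements. I expect to handle this exactly as in the proofs of Theorems~\ref{thm:recoverimpliessecure} and~\ref{thm:secureimpliesrecover}: perform the encoding coherently with a partial isometry $U^{A\rightarrow AC}$ onto a syndrome register, rewrite in the $\cket{x}$ basis to see that the syndrome register carries $Z^{f(\cdot)}$-type phases that act as the coset-shift operators, and verify that tracing out $C$ and $B$ after an $X^A$ measurement yields the ideal PA state $\tfrac1{d^{n-\ell_{\rm CSI}^\epsilon}}\mathbbm{1}\otimes\psi^R$ on the extracted register. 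The back-action bookkeeping — checking that the $X$-side operations induced by the linear $f$ are consistent with the intended $Z$-side transformation — is the only genuinely nontrivial piece; everything downstream is fidelity/trace-distance manipulation identical to Sec.~\ref{sec:approx}.
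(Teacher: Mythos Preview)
Your overall architecture is right: use the stabilizer decomposition induced by the linear map $f$, identify the extractor as the complementary linear functions on the $X^A$ side (what the paper calls $g_\perp$), and invoke Theorem~\ref{thm:recoverimpliessecure}. But you are working much harder than necessary, and one of your justifications is wrong.

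The paper's proof does not condition on a coset value $c$ and then average. Instead, once the subsystem decomposition $A=\bar{A}\hat{A}$ is in place (with $\encz=f({\bf z})$ living on $\bar{A}$ and $\synx=g_\perp({\bf x})$ on $\hat{A}$), the hypothesis ``$Z^A$ is $\epsilon$-recoverable from $(C,B)$'' is literally the statement that ${\bf z}$ is $\epsilon$-recoverable from the joint system $\bar{A}B$. Theorem~\ref{thm:recoverimpliessecure} then applies \emph{in one shot} to the tripartition $\hat{A}\,|\,\bar{A}B\,|\,R$ of the pure state $\ket{\Psi}^{ABR}$: the complementary observable on $\hat{A}$ is $\hat{X}$, i.e.\ $\synx=g_\perp({\bf x})$, and the theorem yields $p_{\rm secure}(\hat{X}|R)\leq\sqrt{2\epsilon}$ directly. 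No coset projection, no Jensen-type averaging, no tracking of how $\psi^R$ changes under conditioning. The whole point of the stabilizer decomposition is that $\bar{A}$ can simply be absorbed into the side-information system.

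Your route via conditioning can be made to work (condition on $\bar{Z}=c$, apply Theorem~\ref{thm:recoverimpliessecure} to the pure conditional state with errors $\epsilon_c$, then average using concavity of $\sqrt{\cdot}$ and $\sum_c p_c\psi_c^R=\psi^R$), but the justification you give for the averaging step is incorrect: the coset label $c=f({\bf z})$ is a function of $Z^A$, not of $X^A$, so it is \emph{not} ``classical side information already available'' on the $X$ side. The correct reason the averaging goes through is that $\bar{Z}$ and $\hat{X}$ commute, so measuring and discarding $\bar{Z}$ leaves the $\hat{A}R$ marginal after the $\hat{X}$ measurement unchanged. Once you see this, you also see that the conditioning was never needed in the first place.
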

\begin{proof}
First we embed system $A$ into an integer number $\lceil \log_2 {\rm dim}(A)\rceil$ of qubits. Then, using the encoding map $f$ we can define a subsystem decomposition $A=\bar{A}\hat{A}$ using $\encz=f({\bf z})$ as detailed in Appendix~\ref{app:stabilizer}. This enables us to write the input state $\ket{\Psi}^{ABR}$ as
\begin{align*}
\ket{\Psi}^{ABR}&=\sum_{\bf z}\sqrt{p_{\bf z}}\ket{\bf z}^A\ket{\varphi_{\bf z}}^{BR}\\
&=\sum_{\encz,\synz}\sqrt{p_{{\bf z}(\encz,\synz)}}\ket{\encz}^{\bar{A}}\ket{\synz}^{\hat{A}}\ket{\varphi_{{\bf z}(\encz,\synz)}}^{BR}.
\end{align*}

Since ${\bf z}$ is $\epsilon$-recoverable from the combined system $\bar{A}B$ by definition of the protocol, 
Theorem~\ref{thm:recoverimpliessecure} applies. Therefore $\synx$, the result of measuring encoded $\hat{X}$ operators on $\hat{A}$, is $\epsilon$-secure against $R$. But $\synx=g_\perp({\bf x})$, for $g_\perp$ related to $f$ as in Appendix~\ref{app:stabilizer}, so $g_\perp$ defines a key extraction function. 
As $f$ outputs a $\ell_{\rm CSI}^\epsilon$-bit string, the output of $g_\perp$ must be a string of $\lceil\log_2{\rm dim}(A)\rceil-\ell_{\rm CSI}^\epsilon$ bits.
\end{proof}  

Now we take up the converse. Again case (a) is similar to results found by Devetak and Winter in~\cite{devetak_distillation_2005}, though, because they do not use linear functions, they cannot directly interpret their use of privacy amplification as data compression of an independently-defined complementary observable. We shall return to this issue at the end of this section. Reiterating the statement made prior to Theorem~\ref{thm:secureimpliesrecover}, case (b) is more naturally suited to the data compression with side information scenario, whose input a cq state by assumption. 

\begin{theorem}
Let $\mathcal{P}^\epsilon_{\rm PA}$ be a protocol for privacy amplification of $X^A$ against $R$ consisting of a linear extraction map $g:X^A\rightarrow K=\{0,1\}^{\ell_{\rm PA}^\epsilon}$ and let the input be a pure state $\psi^{ABR}$ such that either \emph{(a)} $H(X^A|B)_\psi=0$  or \emph{(b)} $H(Z^A|R)_\psi=0$. If $\mathcal{P}^\epsilon_{\rm PA}$ produces $\ell_{\rm PA}^\epsilon$ $\epsilon$-secure bits, then the extraction map can then be used to define a compressor and corresponding decoding map which together can be used to compress $Z^A$ to $\lceil\log_2{\rm dim}(A)\rceil-\ell_{\rm PA}^\epsilon$ bits such that $Z^A$ is $\sqrt{2\epsilon}$-recoverable from the side information $B$ and compressed version $C$.
\end{theorem}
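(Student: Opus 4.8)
The plan is to mirror the proof of the preceding theorem, swapping the roles of the two quantities and invoking Theorem~\ref{thm:secureimpliesrecover} in place of Theorem~\ref{thm:recoverimpliessecure}. First I would embed $A$ into $\lceil\log_2\dim(A)\rceil$ qubits and use the linear extraction map $g$, together with the correspondence of Appendix~\ref{app:stabilizer}, to define a subsystem splitting $A=\bar A\hat A$ in which $\hat A$ consists of $\ell_{\rm PA}^\epsilon$ qubits, measuring the encoded $\hat X$ operators on $\hat A$ returns the key $g({\bf x})$, and measuring the encoded $\bar Z$ operators on $\bar A$ yields a linear syndrome $\encz=f_\perp({\bf z})$ of $\lceil\log_2\dim(A)\rceil-\ell_{\rm PA}^\epsilon$ bits, with $f_\perp$ related to $g$ exactly as $f$ was related to $g_\perp$ in the previous theorem. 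One can then write $\ket{\psi}^{ABR}$ in the basis $\ket{\encz}^{\bar A}\ket{\synz}^{\hat A}$, so that $\encz$ is the compressed string $C$, the leftover syndrome $\synz$ (the encoded $\hat Z$ value on $\hat A$) is what a decoder must reconstruct, and ${\bf z}$ is the unique string consistent with the pair $(\encz,\synz)$.

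Next I would feed this into Theorem~\ref{thm:secureimpliesrecover}, applied to the pure state $\psi^{ABR}$ viewed as a tripartite state on $\hat A$, $\bar A B$ and $R$. The hypothesis $p_{\rm secure}(\hat X^{\hat A}|R)_\psi\leq\epsilon$ is precisely the guarantee of the PA protocol, since its $\ell_{\rm PA}^\epsilon$ output bits are the outcomes of measuring the encoded $\hat X$ operators on $\hat A$. For case (a) I would check $H(\hat X^{\hat A}|\bar A B)_\psi=0$ and for case (b) $H(\hat Z^{\hat A}|R)_\psi=0$; both follow from the original hypotheses because, by construction, $\hat X^{\hat A}=g(X^A)$ is a linear function of $X^A$ and $\synz=\hat Z^{\hat A}$ is a linear function of $Z^A$, and conditioning cannot increase entropy, so $H(X^A|B)=0$ gives $H(\hat X^{\hat A}|\bar A B)=0$ and $H(Z^A|R)=0$ gives $H(\hat Z^{\hat A}|R)=0$. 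The theorem then furnishes a measurement on $\bar A B$ recovering $\hat Z^{\hat A}$ with probability at least $1-\sqrt{2\epsilon}$. The compressor is the map ${\bf z}\mapsto C=f_\perp({\bf z})$; the decoder, holding the classical string $C$ and the system $B$, adjoins the computational-basis state $\ket{C}^{\bar A}$, runs that recovery measurement on $\bar A B$, obtains an estimate of $\synz$, and outputs the unique ${\bf z}$ consistent with $(C,\synz)$. The error is at most $\sqrt{2\epsilon}$ and $C$ has $\lceil\log_2\dim(A)\rceil-\ell_{\rm PA}^\epsilon$ bits, as required.

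The step I expect to demand the most care is the mismatch between what Theorem~\ref{thm:secureimpliesrecover} supplies -- recovery of $\hat Z^{\hat A}$ from the \emph{quantum} register $\bar A$ together with $B$ in the coherent state $\psi^{ABR}$ -- and what a compression decoder actually possesses: only the classical value $C$ that $\bar A$ would take in a hypothetical $Z^A$ measurement, alongside $B$. One must argue that replacing the quantum $\bar A$ by this classical value costs nothing for the recovery. In case (b) this is clean: $H(Z^A|R)_\psi=0$ makes the conditional purifications $\ket{\varphi_{\bf z}}^{BR}$ mutually orthogonal, so after a hypothetical $Z^A$ measurement the $\bar A$ register genuinely sits in a computational-basis state and the state of $\bar A B$ conditioned on any value of $\hat Z^{\hat A}$ is block-diagonal in the $\bar A$ computational basis -- hence $p_{\rm guess}(\hat Z^{\hat A}|C,B)=p_{\rm guess}(\hat Z^{\hat A}|\bar A B)\geq 1-\sqrt{2\epsilon}$. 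In case (a) the identification is subtler; here one exploits that $H(X^A|B)_\psi=0$ places a perfect copy of $X^A$, and therefore of both the key $g(X^A)$ and the complementary $\bar A$-coordinate, inside $B$, so the recovery can be reorganized to act on $B$ and the classical $C$ rather than on $\bar A$ coherently. This is essentially the mechanism Devetak and Winter exploit in~\cite{devetak_distillation_2005}, here arranged so that the extractor $g$ and its complementary linear syndrome $f_\perp$ are defined independently; making this bookkeeping airtight is the crux of the proof.
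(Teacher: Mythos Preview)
Your plan is essentially the paper's own proof, modulo a cosmetic swap of the $\bar A/\hat A$ labels: the paper sets $\encx=g({\bf x})$ so that the key lives on $\bar A$ and the compressor output $\synz=f_\perp({\bf z})$ lives on $\hat A$, whereas you put the key on $\hat A$ and the compressed string on $\bar A$. With that relabelling your invocation of Theorem~\ref{thm:secureimpliesrecover} on the split $\hat A\,|\,\bar A B\,|\,R$ is exactly the paper's invocation on $\bar A\,|\,\hat A B\,|\,R$, and your verification of the side conditions (a) and (b) via ``a function of $X^A$ (resp.\ $Z^A$) has no more conditional entropy'' is correct. Your case~(b) reduction from the quantum register $\bar A$ to the classical $C$ is also sound: $H(Z^A|R)_\psi=0$ forces the $R$-marginals $\varphi_z^R$ to have orthogonal supports, so the cross terms $\mathrm{Tr}_R\bigl[\ket{\varphi_z}\bra{\varphi_{z'}}^{BR}\bigr]$ vanish and the $\bar A B$ state conditioned on any $\hat Z$ value is block-diagonal in the $\bar A$ computational basis. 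The paper phrases this slightly differently (it observes that $R$ already holds a copy of $\synz$, so broadcasting the compressor output to $R$ is redundant), but the content is the same.

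Where your write-up is genuinely incomplete is case~(a). You correctly identify the mechanism---$B$ carries a perfect copy of $X^A$---but ``the recovery can be reorganized'' is not yet a proof, and the block-diagonality argument you used for (b) does \emph{not} apply here (the $\bar A B$ state is not $Z$-block-diagonal in $\bar A$). The paper's way of closing this is concrete and worth internalizing: first measure the compressor-output subsystem (your $\bar A$, the paper's $\hat A$) in the $Z$ basis, obtaining $C$. Using the explicit form $\ket{\Psi}=\sum_{\encx,\synx}\sqrt{q_{\bf x}}\ket{\encx}^{\bar A}\ket{\synx}^{\hat A}\ket{\encx}^{\bar B_1}\ket{\synx}^{\hat B_1}\ket{\vartheta_{\bf x}}^{B_2R}$, one checks that the post-measurement state for outcome $C$ differs from the $C=0$ state only by a known unitary $X^{C}$ acting on the $B_1$ copy. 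Hence the $R$-marginals conditioned on the key value are unchanged, the $\epsilon$-security hypothesis still holds, and Theorem~\ref{thm:secureimpliesrecover} can be re-applied for each $C$ to produce a measurement on $B$ alone. That is the ``bookkeeping'' you flagged as the crux; once you write it out, your proof and the paper's coincide.
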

\begin{proof}
Start with case (a). Again we embed system $A$ into $d^A\equiv\lceil \log_2 {\rm dim}(A)\rceil$ bits for simplicitly. The input state has the form
\begin{align}
\ket{\Psi}^{ABR}&=\sum_{\bf x}\sqrt{q_{\bf x}}\cket{\bf x}^A \cket{\bf x}^{B_1}\ket{\vartheta_{\bf x}}^{B_2R}\nonumber\\
&=\sum_{\encx,\synx}\sqrt{q_{{\bf x}}}\ket{\encx}^{\bar{A}}\ket{\synx}^{\hat{A}}\ket{\encx}^{\bar{B_1}}\ket{\synx}^{\hat{B_1}}\ket{\vartheta_{{\bf x}}}^{B_2R},
\label{eq:secureiid}
\end{align}
where we have used the subsystem decomposition $A=\bar{A}\hat{A}$ from $\encx=g({\bf x})$ and suppressed the dependence of ${\bf x}$ on $(\encx,\synx)$.
By assumption $\encx$ is $\epsilon$-secure against $R$. Thus, Theorem~\ref{thm:secureimpliesrecover} applies to the division $\bar{A}|\hat{A}B|R$, and there exists a measurement $\Lambda^{\hat{A}B}_\encz$ such that $\encz$ is $\sqrt{2\epsilon}$-recoverable from $\hat{A}B$. 

Since $\hat{A}$ is not directly available to the decoder, we must break the measurement down into a compressor with classical output and subsequent measurement of $B$ alone, conditional on this output.  To do this, suppose $\hat{A}$ is measured in the $Z$ basis, producing $\synz$, which results in the state
\begin{align*}
\ket{\Psi_\synz}&=\sum_{\encx,\synx}\sqrt{q_{\bf x}}\ket{\encx}^{\bar{A}}\ket{\encx}^{\bar{B}_1}\omega^{\synx\cdot\synz}\ket{\synx}^{\hat{B}_1}\ket{\vartheta_{{\bf x}}}^{B_2R}\\
&=\sum_{\encx,\synx}\sqrt{q_{\bf x}}\ket{\encx}^{\bar{A}}\ket{\encx}^{\bar{B}_1}\left(X^{\synz}\right)^{\hat{B}_1}\ket{\synx}^{\hat{B}_1}\ket{\vartheta_{{\bf x}}}^{B_2R}
\end{align*}
with probability ${1}/(d^A-\ell_{\rm PA}^\epsilon)$. All $\synz$ dependence drops out when tracing out the $B$ systems, so the marginal states of $R$ conditional on $\encx$ are the same as in~(\ref{eq:secureiid}). Therefore, Theorem~\ref{thm:secureimpliesrecover} implies $\encz$ is $\epsilon$-recoverable from $B$ alone for each value of $\synz$. Since the pair $(\encz,\synz)$ fixes the value of ${\bf z}$, $\synz\equiv f_\perp({\bf z})$ is a suitable compression map enabling $\epsilon$-recovery of ${\bf z}$ from $B$ and $C=f_\perp(Z^A)$.

Now consider case (b), whose input state is of the form
\begin{align}
\ket{\Psi}&=\sum_{\bf z}\sqrt{p_{\bf z}}\ket{\bf z}^A\ket{\bf z}^{R_1}\ket{\varphi_{\bf z}}^{BR_2}\label{eq:caseb1}\\
&=\tfrac{1}{\sqrt{d^n}}\sum_{{\bf x}}\cket{\bf x}^A (Z^{\bf x})^{R_1}\ket{\Theta}^{BR_2}\nonumber\\
&=\tfrac{1}{\sqrt{d^n}}\sum_{\encx,\synx}\ket{\encx}^{\bar A}\ket{\synx}^{\hat{A}} (Z^{\encx})^{\bar{R}_1}(Z^{\synx})^{\hat{R}_1}\ket{\Theta}^{BR_2}.\nonumber
\end{align}
Here we have converted to the alternate form in the second equation, following~(\ref{eq:alternate}), with $\ket{\Theta}^{BR}=\sum_{\bf z}\sqrt{p_{\bf z}}\ket{\bf z}^{R_1}\ket{\varphi_{\bf z}}^{BR_2}$. For the third equation we again use the subsystem decomposition for $A=\bar{A}\hat{A}$ as well as $R_1=\bar{R}_1\hat{R}_1$. 
By assumption, $\encx$ is $\epsilon$-secure against $R$, so just as for case (a) Theorem~\ref{thm:secureimpliesrecover} applies to the division $\bar{A}|\hat{A}B|R$ and implies there exists a measurement $\Lambda^{\hat{A}B}_\encz$ such that $\encz$ is $\sqrt{2\epsilon}$-recoverable from $\hat{A}B$. 

This measurement can be broken down into a compression map with classical output followed by measurement of $B$ alone following the technique used in the previous case. This time, we model the measurement quantum-mechanically, as the transformation $\ket{\synz}^{\hat{A}_1}\rightarrow \ket{\synz}^{C}\ket{\synz}^{R_3}$. However, from~(\ref{eq:caseb1}) it is clear that the same effect can be achieved by the transformation $\ket{\synz}^{\hat{A}_1}\rightarrow \ket{\synz}^{C}$ followed by $\ket{\synz}^{\hat{R}_1}\rightarrow \ket{\synz}^{\hat{R}_1}\ket{\synz}^{{R}_3}$. In other words, there is no need to distribute $\synz$ to $R$, since $R$ already has a copy. Thus, the effect of the measurement is simply to transfer $\hat{A}$ to $C$. Using the function $\synz\equiv f_\perp({\bf z})$ as the compressor therefore ensures that $\encz$, and hence ${\bf z}$, is $\epsilon$-recoverable from $(B,C)$. 

In both cases $f_\perp$ outputs $d^A-\ell^\epsilon_{\rm PA}$ bits when $g$ outputs $\ell^\epsilon_{\rm PA}$, completing the proof.\end{proof}  


\section{Discussion \& Applications}
\label{sec:applications}

The reasons for restricting attention to linear hashing techniques in this analysis should now be more understandable. Since the duality between PA and CSI is meant to hold for complementary observables, it is not a priori clear that, e.g.\ a given privacy amplification function applied to $X^A$ has a well-defined action on $Z^A$, let alone the desired one. However, the use of linear hashing to deal with this problem is only shown here to be sufficient, not necessary, and it would be nice to understand more precisely under what circumstances this duality holds. 

This issue is somewhat subtle, and deserves further comment. By the results in Sec.~\ref{sec:approx}, once, say, privacy amplification of $X^A$ against $R$ has been performed, it is certainly possible to define an appropriate complementary observable $Z^A$ so that it is recoverable from $B$. However, this observable generally has nothing whatsoever to do with a complementary observable that we might have defined for the input to the privacy amplification procedure, and in particular, the two need not commute so as to be simultaneously well-defined. For instance, in~\cite{devetak_distillation_2005}, privacy amplification is used as the second step of an entanglement distillation protocol. Since the output is entangled, both $X^A$ and $Z^A$ complementary observables are recoverable from $B$. But these observables have nothing to do with complementary observables one would have defined for the \emph{input} to the protocol, so one cannot say the PA procedure performs CSI. Thus, while $\epsilon$-security of $X^A$ and $\epsilon$-recovery of $Z^A$ always go hand in hand, it does not follow from that alone that PA and CSI protocols necessarily do, too. 
On the other hand, in many situations in quantum information processing, such as in~\cite{devetak_distillation_2005}, this distinction is not important.  

Perhaps the most direct application of our results is a general entropic uncertainty relation formulated in terms of the smooth conditional min- and max-entropies~\footnote{That such a consequence ought to hold was pointed out by Matthias Christandl.}. Using the upper bound of~\ref{eq:csi}, Theorem~\ref{thm:recoverimpliessecure}, and the lower bound of~\ref{eq:pa} for an input system whose dimension $d$ is a power of two, we immediately obtain
\begin{align}
\log_2 d\leq \hmin{\sqrt{2\sqrt{2\epsilon}}}(X^A|R)_\psi+\hmax{\epsilon_1}(Z^A|B)_\psi+2\log\tfrac{1}{\epsilon_2}+4
\end{align}
for $\epsilon=\epsilon_1+\epsilon_2$. From the definition of the smoothed conditional max-entropy it follows that $\hmax{\epsilon'}(Z^A|B)\leq \hmax{\epsilon}(Z^A|B)$ for $\epsilon'\geq \epsilon$, so if we choose $\epsilon_1=\epsilon_2=\frac{\epsilon}{2}$ and $\epsilon=\frac{\delta^4}{8}$, the above expression can be transformed into the more appealing form
\begin{align}
\hmin{\delta}(X^A|R)_\psi+\hmax{\delta}(Z^A|R)_\psi\geq \log_2 d-8\log\tfrac{1}{\delta}-12.
\end{align}
This extends the recent work on uncertainty principles valid in the presence of quantum memory~\cite{renes_conjectured_2009,berta_entropic_2009} to the smooth min- and max-entropy. Due to the operational interpretations of these quantities~\cite{konig_operational_2009}, this relation should be useful in the analysis of quantum information processing protocols. 

Another application of this work is to a new approximate quantum error-correcting condition. This will be explored more fully in a future publication, but we can already give a brief overview here. Essentially, the quantum decoupling condition of~\cite{schumacher_approximate_2002} mentioned in the introduction can be broken down into two classical pieces. That condition states that $AB$ is maximally entangled when $A$ is completely uncorrelated with the purification system $R$, and it is in a completely random state. Approximate quantum error-correcting procedures can then be constructed by approximately decoupling $R$. The entanglement distillation procedure of Devetak and Winter~\cite{devetak_distillation_2005} implicitly gives a different characterization, saying that $AB$ is maximally entangled if $Z^A$ is recoverable from $B$ and secure from $R$. Using the duality of these recoverability and security notions, there are in principle two other equivalent  characterizations of approximate entanglement, from which approximate quantum error-correcting procedures can likewise be constructed. The first one states that $AB$ is maximally entangled if both $X^A$ and $Z^A$ are recoverable from $B$, a condition which was implicitly explored in~\cite{renes_physical_2008}. The second is the classical decomposition of the quantum decoupling condition, that $AB$ is entangled if both $X^A$ and $Z^A$ are secure from $R$, with the additional proviso that one of them, say $X^A$, is secure not just from $R$, but from $R$ together with a copy of $Z^A$. 


\begin{acknowledgements}
JMR acknowledges useful discussions with and careful reading of the manuscript by Mark M.\ Wilde and Matthias Christandl. Financial support was provided by the Center for Advanced Security Research Darmstadt (www.cased.de).
\end{acknowledgements}

\appendix{}

\section{Fidelity and Trace Distance}

\label{app:tdf}
Here we recount some facts about the trace distance and fidelity. Proofs can be found in, e.g.\ \cite{nielsen_quantum_2000}.
The trace distance $D(\rho,\sigma)$ between two quantum states $\rho$ and $\sigma$ is defined by
\begin{align}
\label{eq:tracedistance}
D(\rho,\sigma)=\tfrac{1}{2}\left\|\rho-\sigma\right\|_1,
\end{align}
where $\left\|A\right\|_1=\sqrt{A^\dagger A}$. 
It is invariant under unitary operations on the inputs and cannot increase under trace preserving quantum operations. In particular, if a measurement $\Lambda_k$ yields outcome $k$ with probability $r_k$ for $\rho$ and $s_k$ for $\sigma$, then the trace distance bounds the variational distance of the two distributions
\begin{align}
D(\rho,\sigma)\geq \tfrac{1}{2}\sum_k \left|r_k-s_k\right|.
\end{align}
Moreover, the trace distance is the largest probability difference the two states $\rho$ and $\sigma$ could assign to the same measurement outcome $\Lambda$, 
\begin{align}
D(\rho,\sigma)=\max_\Lambda {\rm Tr}\left[\Lambda(\rho-\sigma)\right],\quad 0\leq\Lambda\leq\mathbbm{1}.
\end{align}
Therefore, if the trace distance between $\rho$ and $\sigma$ is small, they behave nearly identically under all measurements.

Meanwhile, the fidelity $F(\rho,\sigma)$ is defined by
\begin{align}
\label{eq:fidelity}
F(\rho,\sigma)=\left\|\sqrt{\rho}\sqrt{\sigma}\right\|_1,
\end{align}
and it, too, is invariant under unitary operations on the inputs and monotonic under trace preserving quantum operations, in this case increasing. By Uhlmann's theorem the fidelity of two mixed states is related to the fidelity of their purifications. If $\ket{\psi}^{QR}$ is a purification of $\rho^Q$ and likewise $\ket{\varphi}^{QR}$ is a purification of $\sigma^Q$, then 
\begin{align}
\label{eq:uhlmann}
F(\rho,\sigma)&=\max_{U^R}F(\ket{\psi}^{QR},\left(\mathbbm{1}^Q\otimes U^R\right)\ket{\varphi}^{QR})\\
&=\max_{U^R}\bra{\psi}^{QR}\left(\mathbbm{1}^Q\otimes U^R\right)\ket{\varphi}^{QR},
\end{align}
for $U^R$ a unitary on the purifying system $R$. If this purifying system is different for the two states, say $\ket{\psi}^{QR_1}$ and $\ket{\varphi}^{QR_2}$, then the maximization is instead over partial isometries $U^{R_2\rightarrow R_1}$ taking $R_2$ to $R_1$.  

The trace distance and fidelity are essentially equivalent measures of closeness of two quantum states, via 
\begin{align}
\label{eq:tdfbounds}
1-F(\rho,\sigma)\leq D(\rho,\sigma)\leq\sqrt{1-F(\rho,\sigma)^2}.
\end{align}

\section{Smooth Entropies}
\label{app:smooth}
The smooth min- and max-entropies were first introduced by Renner \& Wolf for the classical case~\cite{renner_smooth_2004,renner_simple_2005} in order to characterize information processing protocols beyond the usual asymptotic i.i.d.\ scenario to cases where the input random variables or channels are essentially structureless. They were subsequently extended to the quantum case by Renner~\cite{renner_security_2005} and Renner \& K\"onig~\cite{renner_universally_2005}, and have undergone several additional refinements. Here we follow the definitions given in~\cite{tomamichel_duality_2009}. 

First, the conditional min-entropy for a state $\rho^{AB}$ is defined by 
\begin{align}
\hmin{}(A|B)_\rho &\equiv \max_{\sigma^B}\left(-\log \lambda_{\min}(\rho^{AB},\sigma^{B})\right),
\end{align}
with $\lambda_{\min}(\rho^{AB},\sigma^B){\equiv}\min\left\{\lambda:\rho^{AB}\leq \lambda\mathbbm{1}^A\otimes \sigma^B\right\}$. Dual to the conditional min-entropy is the conditional max-entropy, defined by 
\begin{align}
\label{eq:max}
\hmax{}(A|B)_\rho\equiv\max_{\sigma^B}\,\, 2\log F(\rho^{AB},\mathbbm{1}^A\otimes\sigma^B).
\end{align}
The two are dual in the sense that, for $\rho^{ABC}$ a pure state, $\hmax{}(A|B)_{\rho}=-\hmin{}(A|C)_{\rho}$~\cite{konig_operational_2009}.
 
Each of these entropies can be \emph{smoothed} by considering states $\bar{\rho}^{AB}$ in the $\epsilon$-neighborhood of $\rho^{AB}$, defined using the purification distance 
$P(\rho,\sigma)\equiv\sqrt{1-F(\rho,\bar{\rho})^2}$,
\begin{align}
B_\epsilon(\rho)\equiv\{\bar{\rho}:P(\rho,\sigma)\leq \epsilon\}.
\end{align}  
Note that the purification distance is essentially equivalent to the trace distance, due to the bounds 
$D(\rho,\sigma)\leq P(\rho,\sigma)\leq\sqrt{2D(\rho,\sigma)}$, which are just a reformulation of~\ref{eq:tdfbounds}.
The smoothed entropies are then given by 
\begin{align}
\hmin{\epsilon}(A|B)_\rho&\equiv\max_{\bar{\rho}\in B_\epsilon(\rho^{AB})} \hmin{}(A|B)_{\bar{\rho}},\\
\hmax{\epsilon}(A|B)_\rho&\equiv\min_{\bar{\rho}\in B_\epsilon(\rho^{AB})} \hmax{}(A|B)_{\bar{\rho}}.
\end{align}
Furthermore, the dual of $\hmax{\epsilon}(A|B)_\rho$ is $\hmin{\epsilon}(A|C)_\rho$, so that taking the dual and smoothing can be performed in either order~\cite{tomamichel_duality_2009}.  

\section{CSS Stabilizer Formalism}
\label{app:stabilizer}
The stabilizer formalism developed by Gottesman~\cite{gottesman_stabilizer_1997} in the context of quantum error-correction is perfectly suited to describing the effects of applying linear functions to the complementary observables $X^A$ and $Z^A$. In fact, here we will only need a subset of these results, for so-called Calderbank-Shor-Steane (CSS) stabilizers. Here we give an exceedingly brief overview; for more details see~\cite{nielsen_quantum_2000}. 

For simplicity, fix $d=2$; the resulting statements actually apply for any $d$ which is a power of a prime number. Starting with a collection $A$ of $n$ 2-dimensional quantum systems $A_1,\dots A_n$, suppose we would like to apply a linear function $f:\{0,1\}^n\rightarrow\{0,1\}^m$ to the result ${\bf z}$ of measuring each system $A_i$ in the $Z$ basis. Since the function is linear, each output bit is the result of computing the inner product of ${\bf z}$ with a fixed binary string ${\bf h}_j$, $f({\bf z})_j={\bf z}\cdot{\bf h}_j$. But then the $j$th output bit is nothing other than the result of measuring the operator $Z^{{\bf h}_j}\equiv Z^{h_{j,1}}\otimes \cdots\otimes Z^{h_{j,n}}$, where $h_{j,k}$ is the $k$th component of the vector ${\bf h}_j$. As much holds for $X$ by Fourier symmetry.

It can be shown that given $m$ linearly independent vectors ${\bf h}_j$, the resulting (commuting) operators $Z^{{\bf h}_j}$ \emph{stabilize} a subspace of dimension $2^k$, with $k=n-m$, meaning that there exist $2^k$ linearly independent common eigenvectors of the set. Therefore, a basis for the space $\mathbbm{F}_2^n$ translates into a basis for the space $\mathbb{C}^{2^n}$ and the operators $Z^{{\bf h}_j}$ form a complete set of commuting observables, to use language more familiar in quantum mechanics. Any basis will do, and indeed the usual decomposition of $\mathbb{C}^{2^n}$ as $n$ copies of $\mathbb{C}^2$ just corresponds to the $\mathbbm{F}_2^n$ basis of vectors defined by components $e_{j,k}=\delta_{j,k}$. 

Moreover, the algebra of $\mathbbm{F}_2^n$ carries over into the commutation relations between $X$-type stabilizers and $Z$-type stabilizers. Since $XZ=-ZX$, it follows immediately that 
\begin{align}
X^{{\bf g}_j}Z^{{\bf h}_k}=(-1)^{{\bf g}_j\cdot {\bf h}_k}Z^{{\bf h}_k}X^{{\bf g}_j}.
\end{align} 
This condition can be used to define encoded qubits and corresponding anticommuting encoded $X$ and $Z$ operators. The $X$ and $Z$ operators of the physical qubits are such that each $Z^{{\bf e}_j}$ anticommutes with just one of the $X^{{\bf e}_k}$, namely $j=k$, and commutes with all the others. This can be extended to an arbitrary basis ${\bf h}_j$ by finding its dual basis ${\bf g}_k$ for which ${\bf g}_j\cdot {\bf h}_k=\delta_{j,k}$. Each pair $({\bf g}_j,{\bf h}_j)$ then corresponds to a pair of encoded $(X,Z)$ operators. 

Suppose $f({\bf z})$ is a linear function for which the corresponding set of vectors $\{\bar{\bf h}_j\}_{j=1}^m$ is linearly independent. This set we can take as defining a basis for the corresponding subspace in $\mathbbm{F}_{2}^m$, and this basis can be completed by finding a basis of $n-m$ vectors $\{\hat{\bf h}_j\}_{j=m+1}^n$ for the complementary subspace. The complementary basis defines its own function, $f_\perp$. Since together $f$ and $f_\perp$ make up an invertible function (the associated matrix is invertible), a string ${\bf z}$ can just as well be characterized by the pair $(f({\bf z}),f_\perp({\bf z}))$. Calling these outputs $\encz=f({\bf z})$ and $\synz=f_\perp({\bf z})$, respectively, we can regard ${\bf z}$ as a function of the pair $(\encz,\synz)$. 

The stabilizer construction allows us to apply this transformation to the state vectors of the $n$ qubits as well, meaning that we can relabel the basis states $\ket{\bf z}\simeq \ket{\encz,\synz}$. That is, using the stabilizers we can perform $\mathbbm{F}_2^{n}$ arithmetic inside the kets in a meaningful way. And it respects the tensor product as well, meaning for a system $A$ of $n$ qubits we can use the collection of encoded operators for $f$ to define a subsystem $\bar{A}$ and those for $f_\perp$ to define a subsystem $\hat{A}$, so that together $\mathcal{H}_A=\mathcal{H}_{\bar{A}}\otimes\mathcal{H}_{\hat{A}}$. Or, more compactly, $A=\bar{A}\hat{A}$. 

Finally, we can now see that this formalism controls the back action from applying $f$ on $Z^A$ to the complementary operators $X^A$. In the above decomposition of $A$ into $\bar{A}$ and $\hat{A}$, we are still free to switch to the complementary basis in either subsystem, and in doing so we go from $f$ ($f_\perp$) to $g$ ($g_\perp$). If we convert $\hat{A}$ to the $X$ basis, then the resulting basis describes the possible simultaneous $f({\bf z})$ and $g_\perp({\bf x})$ outputs, even though ${\bf x}$ and ${\bf z}$ do not exist simultaneously.

\bibliographystyle{rspublicnat}
\bibliography{hswpa}
\label{lastpage}
\end{document}